\documentclass[11pt]{article}
\usepackage{fullpage}

\usepackage{times}
\usepackage{comment,amsfonts,amssymb,amsmath,amsthm,graphicx,algorithmic,algorithm}
\newcommand{\commentout}[1]{}

\ifx\pdftexversion\undefined
\usepackage[colorlinks,linkcolor=black,filecolor=black,citecolor=black,urlco
lor=black,pdfstartview=FitH]{hyperref}
\else
\usepackage[colorlinks,linkcolor=blue,filecolor=blue,citecolor=blue,urlcolor
=blue,pdfstartview=FitH]{hyperref}
\fi

\newcommand{\alert}[1]{\textbf{\color{red}
[[[#1]]]}\marginpar{\textbf{\color{red}**}}\typeout{ALERT:
\the\inputlineno: #1}}

\newcommand{\R}{\mathbb{R}}

\newcommand{\eps}{\varepsilon}

\newcommand{\mommit}[1]{}
\newcommand{\namedref}[2]{\hyperref[#2]{#1~\ref*{#2}}}

\newcommand{\theoremref}[1]{\namedref{Theorem}{#1}}

\newcommand{\figureref}[1]{\namedref{Figure}{#1}}
\newcommand{\algref}[1]{\namedref{Algorithm}{#1}}

\newcommand{\lemmaref}[1]{\namedref{Lemma}{#1}}

\newcommand{\propref}[1]{\namedref{Proposition}{#1}}
\newcommand{\conjref}[1]{\namedref{Conjecture}{#1}}
\newcommand{\obref}[1]{\namedref{Observation}{#1}}

\newtheorem{theorem}{Theorem}
\newtheorem{lemma}{Lemma}

\newtheorem{claim}[lemma]{Claim}
\newtheorem{proposition}[lemma]{Proposition}

\newtheorem{conjecture}{Conjecture}
\newtheorem{definition}{Definition}
\newtheorem{observation}[lemma]{Observation}

\usepackage{pdfsync}

\begin{document}

\title{Light Spanners}
\author{
Michael Elkin\thanks{Department of Computer Science, Ben-Gurion University of the Negev,
Beer-Sheva, Israel. Email: \texttt{elkinm@cs.bgu.ac.il}}
\and
Ofer Neiman\thanks{Department of Computer Science, Ben-Gurion University of the Negev,
Beer-Sheva, Israel. Email: \texttt{neimano@cs.bgu.ac.il}. Supported in part by ISF grant No. (523/12) and by the European Union's Seventh Framework Programme (FP7/2007-2013) under grant agreement $n^\circ$303809.}
\and
Shay Solomon\thanks{Department of Computer Science and Applied Mathematics, The Weizmann Institute of Science, Rehovot 76100, Israel.
Email: \texttt{shay.solomon@weizmann.ac.il}. This work is supported by the Koshland Center for basic Research.}
}
\date{}
\maketitle
\begin{abstract}
A \emph{$t$-spanner} of a weighted undirected graph $G=(V,E)$, is a subgraph $H$ such that $d_H(u,v)\le t\cdot d_G(u,v)$ for all $u,v\in V$. The sparseness of the spanner can be measured by its size (the number of edges) and weight (the sum of all edge weights), both being important measures of the spanner's quality -- in this work we focus on the latter.

Specifically, it is shown that for any parameters $k\ge 1$ and $\eps>0$, any weighted graph $G$ on $n$ vertices admits a $(2k-1)\cdot(1+\eps)$-stretch spanner of weight at most $w(MST(G))\cdot O_\eps(kn^{1/k}/\log k)$, where $w(MST(G))$ is the weight of a minimum spanning tree of $G$.
Our result is obtained via a novel analysis of the classic greedy algorithm, and improves previous work by a factor of $O(\log k)$.
\end{abstract}

\section{Introduction}

Given a weighted connected graph $G=(V,E)$ with $n$ vertices and $m$ edges, let $d_G$ be its shortest path metric. A $t$-spanner $H=(V,E')$ is a subgraph that preserves all distances up to a multiplicative factor $t$. That is, for all $u,v\in V$, $d_H(u,v)\le t\cdot d_G(u,v)$. The parameter $t$ is called the {\em stretch}. There are several parameters that have been studied in the literature that govern the quality of $H$, two of the most notable ones are the {\em size} of the spanner (the number of edges) and its total {\em weight} (the sum of weights of its edges).

There is a basic tradeoff between the stretch and the size of a spanner. For any graph on $n$ vertices, there exists a $(2k-1)$-spanner with $O(n^{1+1/k})$ edges \cite{ADDJS93}. Furthermore, there is a simple greedy algorithm for constructing such a spanner, which we shall refer to as the {\em greedy spanner} (see \algref{alg:greedy}). The bound on the number of edges is known to be asymptotically tight for certain small values of $k$, and for all $k$ assuming Erd\H{o}s' girth conjecture.

In this paper we focus on the weight of a spanner. Light weight spanners are particularly useful for efficient broadcast protocols in the message-passing model of distributed computing \cite{ABP90,ABP91},
where efficiency is measured with respect to both the total communication cost (corresponding to the spanner's weight) and the speed of message delivery at all destinations (corresponding to the spanner's stretch).
Additional applications of light weight spanners in distributed systems include network synchronization and computing global functions \cite{ABP90,ABP91,Peleg00}.
Light weight spanners were also found useful for various data gathering and dissemination tasks in overlay networks \cite{BKRCV02,KV01},
in wireless and sensor networks \cite{SS10},  for network design \cite{MP98,SCRS01}, and routing \cite{WCT02}.

While a minimum spanning tree (MST) has the lowest weight among all possible connected spanners, its stretch can be quite large. Nevertheless, when measuring the weight of a spanner, we shall compare ourselves to the weight of an MST:
The {\em lightness} of the spanner $H$ is defined as $\frac{w(H)}{w(MST)}$ (here $w(H)$ is the total edge weight of $H$). It was shown by \cite{ADDJS93} that the lightness of the greedy spanner is at most $O(n/k)$, and their result was improved by \cite{CDNS92}, who showed that for any $\eps>0$ the greedy $(2k-1)\cdot(1+\eps)$-spanner has $O_\eps(n^{1+1/k})$ edges and lightness $O(k\cdot n^{1/k}/\eps^{1+1/k})$.
A particularly interesting special case arises when $k \approx \log n$. Specifically, in this case the result of \cite{CDNS92} provides stretch and lightness both bounded by $O(\log n)$. Another notable point on the tradeoff curve of \cite{CDNS92} (obtained by setting $\eps=\log n$ as well) is stretch $O(\log^2 n)$ and lightness $O(1)$.

These results of \cite{CDNS92} remained the state-of-the-art for more than twenty years.
In particular, prior to this work it was unknown if spanners with stretch $O(\log n)$ and lightness $o(\log n)$,
or vice versa, exist.
In this paper we answer this question in the affirmative, and show in fact something stronger -- spanners
with stretch and lightness both bounded by $o(\log n)$ exist.
We provide a novel analysis of the classic greedy algorithm, which improves the tradeoff of \cite{CDNS92} by a factor of $O(\log k)$.
Specifically, we prove the following theorem.

\begin{theorem}\label{thm:main}
For any weighted graph $G=(V,E)$ and parameters $k\ge 1$, $\eps>0$, there exists a $(2k-1)\cdot(1+\eps)$-spanner $H$ with $O(n^{1+1/k})$ edges\footnote{In fact for large $\eps$ a better bound can be obtained. Specifically, it is $O(n^{1+1/\lfloor\lceil(2k-1)\cdot(1+\eps)\rceil/2\rfloor})$.} and lightness $O(n^{1/k}\cdot(1+k/(\eps^{1+1/k}\log k)))$.
\end{theorem}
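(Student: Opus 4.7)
The plan is to analyze the classic greedy spanner algorithm with stretch parameter $t := (2k-1)(1+\eps)$. The size bound $|E(H)| = O(n^{1+1/k})$ follows from a standard observation: the greedy rule guarantees that every cycle in $H$ exceeds $t$ times its heaviest edge, and restricting to edges within any factor-$2$ weight window forces the effective girth to be $> 2k$, so the Moore bound yields at most $O(n^{1+1/k})$ edges. The stronger bound for large $\eps$ claimed in the footnote comes from noting that the true integer girth threshold is $\lceil t \rceil + 1$ once one accounts for the $(1+\eps)$ slack in the stretch.

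For the lightness bound, I would first normalize so that $w(MST(G)) = n-1$ and then partition the greedy spanner edges by weight scale: for an appropriate base $\alpha > 1$, set $E_i := \{ e \in H : w(e) \in [\alpha^i,\alpha^{i+1})\}$. At a single scale, the greedy invariant applied against the subspanner already built from strictly lighter edges implies that $(V, E_i)$ has essentially girth $t/\alpha$; choosing $\alpha$ appropriately relative to $1+\eps$ keeps this $\ge 2k+1$, so $|E_i| = O(n^{1+1/k})$ by the Moore bound. Naively one obtains $w(H) \le \sum_i \alpha^{i+1} \cdot |E_i|$, and the number of relevant scales has to be tied back to $w(MST(G))$ by charging each non-trivial scale to an MST edge of comparable weight. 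This is essentially the CDNS92 accounting and already yields $O(k \, n^{1/k}/\eps^{1+1/k})$.

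The novel step, and the main obstacle, is saving a factor of $\log k$ in this sum. The natural approach I would pursue is a hierarchical charging: group $\Theta(\log k)$ consecutive weight scales into a single \emph{super-scale}, and prove that the total number of greedy spanner edges \emph{across an entire super-scale} is still $O(n^{1+1/k})$, rather than that many per individual scale. Intuitively, edges at moderately different weight levels inside the super-scale can be concatenated with short detours through strictly lighter edges, and using the $(1+\eps)$ stretch slack to absorb the detours should show that these edges collectively realize a subgraph whose effective girth remains $\ge 2k+1$ after a small rescaling. If this goes through, the super-scale Moore bound saves exactly the $\log k$ factor in the summation, yielding lightness $O(n^{1/k} \cdot k / (\eps^{1+1/k} \log k))$; the additive $O(n^{1/k})$ term corresponds to the trivial regime in which very few scales are non-trivial.

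The hardest technical obstacle is making the super-scale argument quantitatively tight: one must carefully choose the base $\alpha$ and the super-scale width $\Theta(\log k)$ so that (i) the detours used to argue the high-girth structure consume only an $\eps$-fraction of the stretch budget, yielding the $\eps^{1+1/k}$ dependence, and (ii) the charging from super-scales to $MST$ edges still pays each super-scale with weight proportional to it. A secondary issue is handling boundary effects at the top and bottom weight scales and showing that only $O(\log n)$ scales can be non-empty, which together with the additive $O(n^{1/k})$ term suffices to control the contribution from sparsely-populated super-scales.
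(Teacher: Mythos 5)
Your overall architecture matches the paper's: run the greedy algorithm with stretch $(2k-1)(1+\eps)$, contract the MST tour into intervals, and widen the weight scales from factor $2$ to factor $k$ (equivalently, merge $\Theta(\log k)$ dyadic scales into one super-scale), so that the geometric sum over scales contributes $\sum_i k^{-i/k}=O(k/\log k)$ instead of $\sum_i 2^{-i/k}=O(k)$. That arithmetic is right. The gap is in the per-super-scale bound itself. What is actually needed is a bound on the \emph{weight} of a super-scale of the form $w(E_i)\le a\cdot O\bigl((L/(\eps a))^{1+1/k}\bigr)$, where $a$ is the \emph{bottom} of the super-scale and $L/(\eps a)$ is the number of intervals at the finest granularity. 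Bounding the \emph{number} of greedy edges in the super-scale by a Moore bound and multiplying by a weight, as you propose, does not get there: if you contract at the fine granularity $\eps a$, you must multiply the edge count by the maximum weight $ka$ of the super-scale, losing a factor of $k$ and ending up worse than the CDNS baseline; if you contract at the coarse granularity $\eps k a$ so that the arithmetic works out, the girth argument breaks for the light edges of the super-scale, since a short cycle made of weight-$a$ edges incurs detours of length $\Theta(\eps k a)$ per hop, which is $k$ times the stretch slack $\Theta(\eps a)$ available to such an edge.

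The paper's resolution --- the actual new idea, and the piece missing from your proposal --- is to contract at the fine granularity but to replace each spanner edge $e$ of the super-scale by roughly $w(e)/a$ auxiliary edges forming a \emph{matching} between the $\lfloor w(e)/a\rfloor$-neighborhoods of the representatives of its two endpoints. The auxiliary edge count then dominates $w(E_i)/a$ by construction, and the work shifts to proving that this auxiliary multigraph is simple and has girth at least $2k+1$. The delicate point there, which your ``concatenate with short detours'' sketch would also hit, is that a short cycle in the auxiliary graph maps only to a closed \emph{walk} in $H$, which may traverse the heaviest source edge twice and therefore yield no contradiction with greediness; the paper rules this out using the matching structure together with a strengthened statement (no two representatives lying in a common $k$-neighborhood are joined by a path of at most $2k$ auxiliary edges). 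Without this mechanism, or a substitute for it, the super-scale claim does not go through. (A minor aside: the $O(n^{1+1/k})$ size bound needs no factor-$2$ weight windows --- the greedy spanner as a whole already has unweighted girth at least $t+2$.)
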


By substituting $k\approx\log n$ we obtain stretch $\log n$ and lightness $O(\log n/\log\log n)$ (for fixed small $\eps$). We also allow $\eps$ to be some large value.
In particular, setting $\eps=\log n/\log\log n$ yields stretch $\log^2n/\log\log n$ and lightness $O(1)$. Also, by substituting $k = \log n/\log\log\log n$ we can have both stretch and lightness
bounded by $O(\log n/\log\log\log n)$.

Our result shows that the potentially natural tradeoff between stretch $2k-1$ and lightness $O(k\cdot n^{1/k})$ is not the right one.
This can also be seen as an indication that the right tradeoff is stretch $(2k-1)$ and lightness $O(n^{1/k})$.
(Note that lightness $O(n^{1/k})$ is the weighted analogue of $O(n^{1+1/k})$ edges, and so
it is asymptotically tight assuming Erd\H{o}s' girth conjecture.)

\subsection{Proof Overview}

The main idea in the analysis of the greedy algorithm by \cite{CDNS92}, is to partition the edges of the greedy spanner to scales according to their weight, and bound the contribution of edges in each scale separately. For each scale they create a graph from the edges selected by the greedy algorithm to the spanner, and argue that such a graph has high girth\footnote{The girth of a graph is the minimal number of edges in a cycle.} and thus few edges. The main drawback is that when analyzing larger weight edges, this argument ignores the smaller weight edges that were already inserted into the spanner.

We show that one indeed can use information on lower weight edges when analyzing the contribution of higher scales. We create a different graph from edges added to the spanner, and argue that this graph has high girth. The new ingredient in our analysis is that we add multiple edges per spanner edge, proportionally to its weight. Specifically, these new edges form a {\em matching} between certain neighbors of the original edge's endpoints.
Intuitively, a high weight edge enforces strong restrictions on the length of cycles containing it, so it leaves a lot of "room" for low weight edges in its neighborhood. The structure of the matching enables us to exploits this room, while maintaining high girth.

Unfortunately, with our current techniques we can only use edges of weight at most $k$ times smaller than the weight of edges in the scale which is now under inspection.
Hence this gives an improvement of $O(\log k)$ to the lightness of the greedy spanner. We hope that a refinement of our method, perhaps choosing the matching more carefully, will eventually lead to an optimal lightness of $O(n^{1/k})$.

\subsection{Related Work}

A significant amount of research attention was devoted to constructing light and sparse spanners for
Euclidean and doubling metrics. A major result is that for any constant-dimensional Euclidean metric and any $\eps>0$, there exists a $(1+\eps)$-spanner with lightness $O(1)$ \cite{DHN93}. Since then there has been a flurry of work on improving the running time and other parameters. See, e.g., \cite{CDNS92,ADMSS95,DES08,ES13-focs,CLNS13}, and the references therein. An important question still left open is whether the $O(1)$ lightness
bound of \cite{DHN93} for constant-dimensional Euclidean metrics can be extended to doubling metrics. Such a light spanner has implications for the running time of a PTAS for the traveling salesperson problem (TSP). Recently, \cite{GS14} showed such a spanner exists for snowflakes\footnote{For $0\le\alpha\le 1$, an $\alpha$-snowflake of a metric is obtained by taking all distances to power $\alpha$.} of doubling metrics.

Light spanners with $(1+\eps)$ stretch have been sought for other graph families as well, with the application to TSP in mind. It has been conjectured that graphs excluding a fixed minor have such spanners. Currently, some of the known results are for planar graph \cite{ADMSS95}, bounded-genus graphs \cite{G00}, unit disk graphs \cite{KPX08}, and bounded pathwidth graphs \cite{GH12}.

A lot of research focused on constructing sparse spanners efficiently, disregarding their lightness.
Cohen \cite{Coh93} devised a randomized algorithm for constructing $((2k-1)\cdot (1+\eps))$-spanners
with $O(k \cdot n^{1+1/k} \cdot (1/\eps) \cdot \log n)$ edges. Her algorithm requires expected $O(m \cdot n^{1/k} \cdot k \cdot (1/\eps) \cdot \log n))$ time.
Baswana and Sen \cite{BS03} improved Cohen's result, and devised an algorithm that constructs $(2k-1)$-spanners with
expected $O(k \cdot n^{1+1/k})$ edges, in expected $O(k \cdot m)$ time.
Roditty et al.\ \cite{RTZ05} derandomized this algorithm, while maintaining the same parameters (including running time).
Roditty and Zwick \cite{RZ04} devised a deterministic algorithm for constructing $(2k-1)$-spanners with
$O(n^{1+1/k})$ edges in $O(k \cdot n^{2+1/k})$ time.


\section{Preliminaries}

Let $G=(V,E)$ be a graph on $n$ vertices with weights $w:E\to\R_+$, and let $d_G$ be the shortest path metric induced by $G$. For simplicity of the presentation we shall assume that the edge weights are positive integers.
(The extension of our proof to arbitrary weights is not difficult, requiring only a few minor adjustments.) For a subgraph $H=(V',E')$ define $w(H)=w(E')=\sum_{e\in E'}w(e)$. A subgraph $H=(V,E')$ is called a \emph{$t$-spanner} if for all $u,v\in V$, $d_H(u,v)\le t\cdot d_G(u,v)$. Define the {\em lightness} of $H$ as $\frac{w(H)}{w(MST(G))}$, where $MST(G)$ is a minimum spanning tree of $G$.
The girth $g$ of a graph is the minimal number of edges in a cycle of $G$. The following standard Lemma is implicit in \cite{bolo-book}.
\begin{lemma}\label{lem:girth}
Let $g>1$ be an integer. A graph on $n$ vertices and girth $g$ has at most $O\left(n^{1+\frac{1}{\lfloor (g-1)/2 \rfloor}}\right)$ edges.
\end{lemma}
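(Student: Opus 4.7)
The plan is to establish the standard Moore-type bound via a BFS counting argument. Set $r = \lfloor (g-1)/2 \rfloor$ and write $m$ for the number of edges and $d = 2m/n$ for the average degree; it then suffices to prove $d = O(n^{1/r})$, since $m = nd/2$.

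First I would reduce to a high-minimum-degree subgraph. Iteratively delete from $G$ any vertex whose current degree is less than $d/4$; each deletion removes fewer than $d/4$ edges, so across the at most $n$ possible deletions one removes fewer than $nd/4$ edges in total. Since $G$ started with $nd/2$ edges, the surviving subgraph $G'$ is nonempty, has minimum degree at least $d/4$, and still has girth at least $g$.

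Next, pick any vertex $v \in V(G')$ and consider its BFS layers $N_i = \{u \in V(G') : d_{G'}(u,v) = i\}$ for $0 \le i \le r$. Because $2r \le g-1$, any edge inside a single layer $N_i$ with $i \le r-1$, or any vertex in $N_i$ with two distinct neighbors in $N_{i-1}$, would together with the corresponding BFS paths from $v$ produce a closed walk of length at most $2r-1 < g$, forcing a cycle shorter than $g$ and contradicting the girth bound. Hence the BFS subgraph up to depth $r$ is a tree, every vertex of $N_i$ for $1 \le i < r$ has at least $d/4 - 1$ neighbors in $N_{i+1}$, and $v$ itself has at least $d/4$ neighbors in $N_1$. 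Iterating the bound on layer sizes gives $n \ge |N_r| \ge (d/4)(d/4 - 1)^{r-1}$, which yields $d = O(n^{1/r})$ once $d$ exceeds an absolute constant.

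I do not expect any real obstacle: the argument is classical. The only things to watch are the trivial regime in which $d$ is bounded by an absolute constant (where the desired lightness bound is immediate), the degenerate case $r = 1$ (corresponding to $g \in \{3,4\}$, where the bound $m = O(n^2)$ holds vacuously), and the bookkeeping of constants in the pruning step, all of which are routine.
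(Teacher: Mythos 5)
Your argument is correct: it is the standard Moore-bound proof (prune to minimum degree $\ge d/4$, then show the BFS layers up to depth $r=\lfloor(g-1)/2\rfloor$ form a tree whose layers grow by a factor of $d/4-1$), and the paper itself gives no proof of this lemma, citing it as implicit in Bollob\'as. The only detail left tacit is that distinct vertices of $N_i$ have disjoint neighbor sets in $N_{i+1}$, but this follows immediately from your observation that no vertex of $N_{i+1}$ has two neighbors in $N_i$, so the counting $|N_{i+1}|\ge(d/4-1)|N_i|$ is justified.
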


\subsection{Greedy Algorithm}

The natural greedy algorithm for constructing a spanner is described in \algref{alg:greedy}.

\begin{algorithm}
\caption{$\texttt{Greedy}(G=(V,E),t)$}\label{alg:greedy}
\begin{algorithmic}[1]
\STATE $H=(V,\emptyset)$.
\FOR {each edge $\{u,v\}\in E$, in non-decreasing order of weight,}
\IF {$d_H(u,v)>t\cdot w(u,v)$}
\STATE Add the edge $\{u,v\}$ to $E(H)$.
\ENDIF
\ENDFOR
\end{algorithmic}
\end{algorithm}

Note that whenever an edge $e\in E$ is inserted into $E(H)$, it cannot close a cycle with $t+1$ or less edges, because the edges other than $e$ of such a cycle will form a path of length at most $t\cdot w(e)$ (all the existing edges are not longer than $w(e)$). This argument suggests that $H$ (viewed as an unweighted graph) has girth $t+2$ (when $t$ is an integer), and thus by \lemmaref{lem:girth}
\begin{equation}\label{eq:num}
|E(H)|\le O\left(n^{1+\frac{1}{\lfloor (t+1)/2 \rfloor}}\right)~.
\end{equation}

We observe that the greedy algorithm must select all edges of an MST (because when inspected they connect different connected components in $H$).
We will assume without loss of generality that the graph $G$ has a unique MST,
since any ties can be broken using lexicographic rules.
\begin{observation}\label{ob:mst}
If $Z$ is the MST of $G$, then $Z\subseteq H$. Furthermore, each edge in the MST does not close a cycle in $H$ when it is inspected.
\end{observation}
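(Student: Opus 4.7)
The plan is to prove the second (stronger) statement first: that no MST edge closes a cycle when it is inspected. The first statement follows immediately, because an edge $e = \{u,v\}$ that does not close a cycle in $H$ at the moment of inspection satisfies $d_H(u,v) = \infty$, which certainly exceeds $t \cdot w(e)$, so the greedy rule adds $e$ to $H$.

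First I would fix notation for the processing order. The uniqueness of the MST (guaranteed by the lexicographic tie-breaking already assumed in the text) lets me totally order the edges as $e_1, e_2, \ldots, e_m$, with greedy inspecting them in this order. The key invariant is then a triviality: at the moment $e_i$ is inspected, every edge currently in $H$ has index strictly less than $i$, since only edges $e_1, \ldots, e_{i-1}$ have been looked at so far.

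Now, toward a contradiction, I would fix some MST edge $e = e_i = \{u,v\}$ and assume a $u$-$v$ path $P$ exists in $H$ at the time of inspection. Combined with $e$, the path $P$ forms a cycle $C$. By the preceding invariant, every edge of $P$ has index less than $i$, so $e$ itself is the edge of maximum index in $C$. The contradiction will come from the cycle property of the unique MST: in any cycle, the edge of strictly largest weight (equivalently, largest index in the tie-broken total order) does not lie in the MST. Applied to $C$, this says $e_i \notin Z$, contradicting the assumption that $e$ is an MST edge.

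I do not expect any serious obstacle; the only subtle point is making the tie-breaking convention explicit so that the cycle property can be stated with ``strictly largest'' rather than ``largest'' weight, which is needed to handle cycles whose edges all happen to have the same weight as $e$. Once that convention is in place, the argument is essentially the standard proof that Kruskal's algorithm produces the MST, recast in the language of the greedy spanner construction.
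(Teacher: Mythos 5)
Your proposal is correct and matches the paper's (much terser) justification: the paper simply notes that MST edges connect different components of $H$ when inspected, and your cycle-property argument with explicit tie-breaking is the standard way to substantiate exactly that claim. No gaps.
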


\section{Proof of Main Result}

Let $H$ be the greedy spanner with parameter $t=(2k-1)\cdot(1+\eps)$.
Let $Z$ be the MST of $G$, and order the vertices $v_1,v_2,\dots,v_n$ according to the order they are visited in some preorder traversal of $Z$ (with some fixed arbitrary root). Since every edge of $Z$ is visited at most twice in such a tour,
\[
L:=\sum_{i=2}^nd_Z(v_{i-1},v_i)\le 2w(Z)~.
\]

Let $I=\lceil\log_kn\rceil$. For each $i\in [I]$, define $E_i=\{e\in E(H)\setminus E(Z)\mid w(e)\in(k^{i-1},k^i]\cdot L/n\}$. We may assume the maximum weight of an edge in $H$ is bounded by $w(Z)$ (in fact $w(Z)/t$, as heavier edges surely will not be selected for the spanner), so each edge in $H \setminus Z$ of weight greater than $L/n$ is included in some $E_i$. The main technical theorem is the following.
\begin{theorem}\label{thm:tech}
For each $i\in[I]$  and any $\eps >0$,
\[
w(E_i)\le O(L\cdot (n/k^{i-1})^{1/k}/\eps^{1+1/k})~.
\]
\end{theorem}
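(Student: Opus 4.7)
My plan is to bound $w(E_i)$ by first partitioning $E_i$ into $O(\log k/\eps)$ sub-classes $F$ whose edge weights lie in a factor-$(1+\eps)$ range $(W',(1+\eps)W']$ for some $W'\in (k^{i-1},k^i]\cdot L/n$, and then bounding $w(F)$ for each sub-class via a high-girth auxiliary graph $\hat G$ whose edge count is \emph{proportional to $w(F)$}---not merely to $|F|$. Setting $W=k^{i-1}L/n$ and using the MST-preorder $v_1,\ldots,v_n$, for each $e=\{u,v\}\in F$ I identify preorder-consecutive ``windows'' $u=u_0,u_1,\ldots,u_{m(e)-1}$ and $v=v_0,\ldots,v_{m(e)-1}$, with $m(e)=\Theta(w(e)/W)$ chosen so that the MST-distance from $u$ to $u_{m(e)-1}$ (and from $v$ to $v_{m(e)-1}$) is at most an $O(\eps/k)$-fraction of $w(e)$; I then add to $\hat G$ the matching edges $\{u_j,v_j\}_{j=0}^{m(e)-1}$. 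Thus each spanner edge is encoded with multiplicity proportional to $w(e)$, and $|E(\hat G)|=\sum_{e\in F}m(e)=\Theta(w(F)/W)$. This reflects the paper's novelty: high-weight edges get many virtual edges, so $|E(\hat G)|$ tracks weight rather than cardinality.

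The main technical step, and the step I expect to be the main obstacle, is showing that $\hat G$ has girth at least $2k+1$. A $g$-cycle in $\hat G$ unfolds into a closed walk in $H$: each virtual edge $\{u_j^{(r)},v_j^{(r)}\}$ is traced by the MST-detour $u_j^{(r)}\leadsto u^{(r)}$, the spanner edge $e_r$, and the MST-detour $v^{(r)}\leadsto v_j^{(r)}$. By the window choice the total MST-detour weight is at most an $O(\eps)$-fraction of $\sum_r w(e_r)$. Applying the greedy stretch property to the heaviest cycle edge $e_{r^*}$---when $e_{r^*}$ was added there was no $H$-path of length $\le t\cdot w(e_{r^*})=(2k-1)(1+\eps)w(e_{r^*})$ between its endpoints---together with the near-uniform-weight condition $w(e_r)\in(W',(1+\eps)W']$, yields $g\ge 2k+1$ by arithmetic. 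The delicate parts are absorbing the various $(1+\eps)$-slacks so the bound survives, and verifying that the unfolded walk genuinely provides an edge-disjoint $H$-path avoiding $e_{r^*}$ (so that the greedy inequality can be applied). This is also where the restriction ``virtual edges from spanner edges of weight within factor $k$'' enters: windows of MST-length $O(\eps w(e)/k)$ are needed for the slack to close.

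Finally, Lemma~\ref{lem:girth} applied to $\hat G$ gives $|E(\hat G)|\le O(N^{1+1/k})$, where $N=O(n/k^{i-1})$ is the effective vertex count of $\hat G$, obtained by bounding how many of the $n$ preorder-distributed vertices the total window length $\sum_{e\in F}O(\eps w(e)/k)$ can reach given the total preorder MST-length $L$. Combining with $|E(\hat G)|=\Theta(w(F)/W)$ and substituting $W=k^{i-1}L/n$ yields $w(F)\le O(L\cdot (n/k^{i-1})^{1/k})$ per sub-class (up to $\eps$-factors); summing the $O(\log k/\eps)$ sub-classes and carefully tracking the $\eps$-slack from the girth arithmetic produces the claimed bound $O(L\cdot(n/k^{i-1})^{1/k}/\eps^{1+1/k})$.
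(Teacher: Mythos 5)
There is a genuine gap, and it is at the very first step: partitioning $E_i$ into $O(\log k/\eps)$ sub-classes of near-uniform weight defeats the purpose of the theorem. For each sub-class $F$ your scheme can give at best $w(F)\le O\bigl(W'\cdot N^{1+1/k}\bigr)=O\bigl(L\cdot(n/k^{i-1})^{1/k}/\eps^{1+1/k}\bigr)$ (the girth lemma offers no mechanism to do better), so summing over the $\Theta(\log k/\eps)$ sub-classes leaves you with $w(E_i)\le O\bigl(L\cdot(n/k^{i-1})^{1/k}\cdot\log k/\eps^{2+1/k}\bigr)$ -- an extra $\log k/\eps$ factor. Propagated through the proof of Theorem~\ref{thm:main}, this extra $\log k$ exactly cancels the $\log k$ gained from grouping weights into factor-$k$ scales, and you land back at the \cite{CDNS92}-type bound. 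Relatedly, once all weights in $F$ agree up to $1+\eps$, your multiplicity $m(e)=\Theta(w(e)/W)$ is the same constant $\Theta(W'/W)$ for every edge of $F$, so the ``virtual edges proportional to weight'' device does no work inside your framework. The paper's essential move is to build a \emph{single} auxiliary graph for the entire factor-$k$ range $E_i$, with multiplicity measured against the bottom of the range $a=k^{i-1}L/n$; the price is that a cycle in the auxiliary graph mixes sources of very different weights, and the girth argument must charge everything to the heaviest source.

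Two further points. First, your vertex set for $\hat G$ consists of actual graph vertices in preorder ``windows,'' and the claim that the effective vertex count is $N=O(n/k^{i-1})$ is asserted, not derived (a priori the windows of distinct edges collectively touch up to $n$ vertices, which would give only $w(F)\le O(W\cdot n^{1+1/k})$). The paper fixes this by linearizing the MST into a path $P$ of length $L$ (with Steiner points), cutting it into $s=8L/(\eps a)=O(n/(\eps k^{i-1}))$ intervals, and using one \emph{representative per interval} as the vertex set of $K$; this pins the vertex count independently of $E_i$ and is also where the $1/\eps^{1+1/k}$ comes from. (Your window length $O(\eps w(e)/k)$ is also too small -- it would force $\Theta(k)$ times more intervals; the paper only needs each detour to be an $O(\eps)$-fraction of $w(e)$ per edge, reserving the $O(k\eps w)$ slack for the single end-to-end correction between $r$ and $r'$.) Second, the step you flag as ``delicate'' -- that the unfolded walk contains a simple $H$-path avoiding the heaviest source $e_{r^*}$ -- is the crux, and it is not enough to note that the virtual edges form a matching. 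The paper proves by minimality a stronger statement (any $K$-path between two representatives lying in a common $k$-neighborhood has at least $2k+1$ edges) and uses the matching property to show that a shortest offending path cannot contain two edges with the same source; without that argument the greedy contradiction cannot be applied.
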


Given this, the proof of \theoremref{thm:main} quickly follows.

\begin{proof}[Proof of \theoremref{thm:main}]
Using that the stretch of the spanner is $t\ge 2k-1$, by \eqref{eq:num} we have $|E(H)|\le O(n^{1+1/k})$.
The total weight of edges in $H$ that have weight at most $L/n$ can be bounded by $L/n\cdot |E(H)|\le L/n\cdot O(n^{1+1/k})=O(w(MST)\cdot n^{1/k})$. The contribution of the other (non-MST) edges to the weight of $H$, using \theoremref{thm:tech}, is at most
\begin{eqnarray*}
\sum_{i=1}^IO(L\cdot (n/k^{i-1})^{1/k}/\eps^{1+1/k}) &\le& O(L\cdot n^{1/k}/\eps^{1+1/k})\sum_{i=0}^\infty e^{-(i\ln k)/k}\\
&=& O(L\cdot n^{1/k}/\eps^{1+1/k})\cdot\frac{1}{1-e^{-(\ln k)/k}}\\
&=&O(w(MST))\cdot kn^{1/k}/(\eps^{1+1/k}\ln k)~.
\end{eqnarray*}
\end{proof}

\subsection{Proof of \theoremref{thm:tech}}

\paragraph{Overview:} Fix some $i\in[I]$. We shall construct a certain graph $K$ from the edges of $E_i$, and argue that this graph has high girth, and therefore few edges. The main difference from \cite{CDNS92} is that our construction combines into one scale edges whose weight may differ by a factor of $k$
(in the construction of \cite{CDNS92} all edges in a given scale are of the same weight, up to a factor of 2).
In order to compensate for heavy edges, the weight of the edge determines how many edges are added to $K$. Specifically, if the edge $\{u,v\}\in E_i$ has weight $w\cdot k^{i-1}\cdot L/n$, we shall add (at least) $\lceil w\rceil$ edges to $K$ that form a {\em matching} between vertices in some neighborhoods of $u$ and $v$. In this way the weight of $K$ dominates $w(E_i)$. To prove that $K$ has high girth, we shall map a cycle in $K$ to a closed tour in $H$ of proportional length. The argument uses the fact that the new edges are close to the original edge, and that a potential cycle in $K$ cannot exploit more than one such new edge, since these edges form a matching.

\paragraph{Construction of the Graph $K$:}
Let $P=(p_0,\dots,p_L)$ be the unweighted path on $L+1$ vertices, created from $V$ by placing $v_1,\dots,v_n$ in this order and adding Steiner vertices so that all consecutive distances are $1$, and for all $2\le j\le n$, $d_P(v_{j-1},v_j)=d_Z(v_{j-1},v_j)$. In particular, $p_0=v_1$, $p_L=v_n$, and for every $1\le j<j'\le n$,
\[
d_P(v_j,v_{j'})=\sum_{h=j+1}^{j'}d_Z(v_{h-1},v_h)~.
\]
Note that $d_P(v_j,v_{j'})\ge d_Z(v_j,v_{j'})\ge d_G(v_j,v_{j'})$, and all the inequalities may be strict. In order to be able to map edges of $K$ back to $H$, we shall also add corresponding Steiner points to the spanner $H$: For every Steiner point $p_h$ that lies on $P$ between $v_{j-1}$ and $v_j$, add a Steiner point on the path in the MST $Z$ that connects $v_{j-1}$ to $v_j$ at distance $d_P(v_{j-1},p_h)$ from $v_{j-1}$ (unless there is a point there already). By \obref{ob:mst} all MST edges are indeed in $H$, and one can simply subdivide the appropriate edge on the MST path. Note that distances in $H$ do not change, as the new Steiner points have degree $2$.
Denote by $\hat{H}$ the modified spanner $H$, i.e., $H$ with the Steiner points.

Let $a=k^{i-1}\cdot L/n$ be a lower bound on the weight of edges in $E_i$. Divide $P$ into $s=8L/(\eps a)$ intervals $I_1,\dots,I_s$, each of length $L/s= \frac{\eps}{8}a$ (by appropriate scaling, we assume all these are integers). For $j\in[s]$, the interval $I_j$ contains the points $p_{(j-1)L/s},\dots,p_{jL/s}$. In each interval $I_j$ pick an arbitrary (interior) point $r_j$ as a representative, and let $R$ be the set of representatives. For each representative $r_j$ and an integer $b\ge 0$ we define its neighborhood $N_b(j)=\{r_h ~:~ |j-h|\le b\}$
to be the set of (at most) $2b+1$ representatives that are at most $b$ intervals away from $I_j$.
(Note that the size of the neighborhood $N_b(j)$ can be smaller than $2b+1$ if $r_j$ is too close to one of the endpoints of the path $P$.)
Define an unweighted (multi) graph $K=(R,F)$ in the following manner. Let $e=\{u,v\}\in E_i$. Assume that $u\in I_h$ and $v\in I_j$ for some $h,j\in [s]$. Let $b=\lfloor w(e)/a\rfloor$, and let $M$ be an arbitrary maximal matching between $N_b(h)$ and $N_b(j)$. Add all the edges of $M$ to $F$,
see \figureref{fig:con}. For each of the edges $\{q,q'\}\in M$ added to $F$, we say that the edge $\{u,v\}$ is its {\em source} when $q\in N_b(h)$ and $q'\in N_b(j)$, and write $S(q,q')=(u,v)$.
We will soon show (in \propref{proposition:simple} below) that each edge in $K$ has a single source.

\begin{figure}[ht]
    \fbox{\includegraphics[width=460pt,clip=true,trim=0 150 30 100]{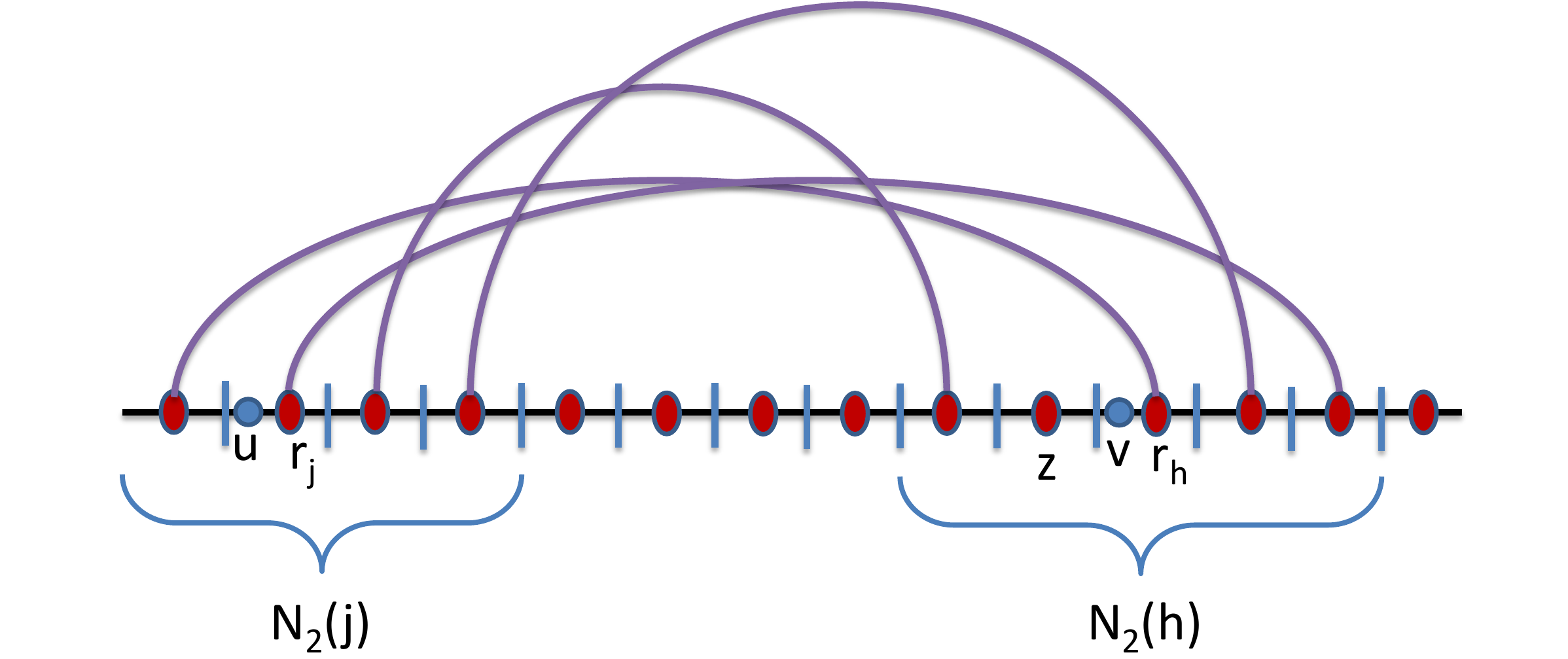}}
    \caption{{\em Construction of the graph $K$}: The oval vertices are $R$, the representatives. The edge $\{u,v\}$ is an edge of weight $2a$ selected for the spanner, and $u\in I_j$, $v\in I_h$ with representatives $r_j,r_h$. The depicted edges, that form a maximal matching between the neighborhoods of $r_j$ and $r_h$, are added to $K$. (The vertex $z$ in $N_2(h)$ does not participate in the matching, because $r_j$ is too close to the left endpoint of the path $P$.)}
\label{fig:con}
\end{figure}

The following observation suggests that if all the edges of $K$ were given weight $a$, then its total weight is greater than or equal to the weight of the edges in $E_i$.
\begin{observation}\label{ob:K-heavy}
$|F|\cdot a\ge w(E_i)$.
\end{observation}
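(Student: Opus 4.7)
My approach is per-edge accounting: for each $e \in E_i$, the construction appends a matching $M_e$ to $F$, and I aim to show that $|M_e| \cdot a \ge w(e)$. Treating $F$ as a multiset (contributions from distinct source edges are counted separately), summing then yields
\[
|F| \cdot a \;=\; \sum_{e \in E_i} |M_e| \cdot a \;\ge\; \sum_{e \in E_i} w(e) \;=\; w(E_i).
\]

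Fix $e = \{u,v\} \in E_i$ with $u \in I_h$, $v \in I_j$. By the definition of $E_i$, its weight satisfies $w(e) \in (a, ka]$, so $b := \lfloor w(e)/a \rfloor$ satisfies $1 \le b \le k$, and in particular $b+1 > w(e)/a$. The next step is a lower bound on the neighborhood sizes: $N_b(h)$ contains $r_h$ together with all $r_\ell$ at interval-distance at most $b$ on either side, so even in the worst case $h\in\{1,s\}$ (where only one side contributes) we have $|N_b(h)|\ge b+1$; for interior $h$ the size is $2b+1$. The same bound applies to $N_b(j)$. Because $K$ is a multigraph in which any pair of representatives is a legitimate edge, the bipartite (multi)graph on $N_b(h)$ and $N_b(j)$ from which $M_e$ is drawn is complete, so any maximal matching saturates the smaller side: $|M_e| = \min(|N_b(h)|, |N_b(j)|) \ge b+1 > w(e)/a$, and consequently $|M_e|\cdot a > w(e)$.

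Summing this per-edge inequality over $e \in E_i$ delivers the observation. The one step where I expect the main subtlety to lie is justifying the lower bound $|N_b(h)|\ge b+1$ unconditionally; it rests on having enough intervals, namely $s\ge b+1$, which follows from the choice $s = 8L/(\eps a)$ together with $b \le k$ in the regime $i \le I = \lceil \log_k n\rceil$. Any residual boundary case can be absorbed by a minor refinement of the partition without loosening the bound, so the inequality $|F|\cdot a \ge w(E_i)$ holds as stated.
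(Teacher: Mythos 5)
Your proof is correct and follows essentially the same route as the paper's: lower-bound each neighborhood by $b+1$ representatives, note that a maximal matching in the complete bipartite graph between the two neighborhoods therefore has at least $b+1 \ge w(e)/a$ edges, and sum over $e \in E_i$. You are somewhat more explicit than the paper about the boundary case (why $|N_b(h)|\ge b+1$ even near the ends of $P$) and about counting $F$ as a multiset, but the argument is the same.
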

\begin{proof}
Note that always $|N_b(j)|\ge b+1$, which means that we add at least $b+1\ge w(e)/a$ edges to $K$ for each edge $e\in E_i$. Summing over all edges concludes the proof.
\end{proof}

\noindent{\bf Mapping from $K$ to $\hat{H}$:} We shall map every edge $\{q,q'\}\in F$ to a tour $T(q,q')$ in the spanner $\hat{H}$ connecting $q$ and $q'$. If $S(q,q')=(u,v)$, then $T(q,q')$ consists of the following paths:
\begin{itemize}
\item A path in $Z$ connecting $q$ to $u$.
\item The edge $\{u,v\}$.
\item A path in $Z$ connecting $v$ to $q'$.
\end{itemize}
The following proposition asserts that the length of the tour is not longer than the weight of the source edge, up to a $1+\eps/2$ factor.
\begin{proposition}\label{proposition:map}
If an edge $\{q,q'\}\in F$ has a source $S(q,q')=(u,v)$ of weight $w$, then $T(q,q')$  is a tour in $\hat{H}$ of length at most $(1+\eps/2)w$.
\end{proposition}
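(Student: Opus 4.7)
The plan is to bound the tour $T(q,q')$ as a sum of three parts: the two tree-paths from $q$ to $u$ and from $v$ to $q'$, and the edge $\{u,v\}$ itself. The middle part contributes exactly $w$. The key estimate is that each of the two tree-paths has length at most $(b+1) \cdot (\eps a/8)$, where $b=\lfloor w/a\rfloor$ and $a=k^{i-1}L/n$, which is what lets us exploit the proportionality between $b$ and $w$ built into the construction.

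First I would observe that since $q\in N_b(h)$ with $u\in I_h$, the representative $q=r_{h'}$ lies in an interval $I_{h'}$ at most $b$ intervals away from $I_h$ on the path $P$. Because consecutive intervals have length $L/s=\eps a/8$, the $P$-distance between $q$ and $u$ is at most $(|h'-h|+1)\cdot L/s \le (b+1)\cdot\eps a/8$. Next I would translate this $P$-distance into a distance in $\hat{H}$. The Steiner-point construction was designed for exactly this: any point of $P$ lying between two consecutive preorder vertices $v_{j-1},v_j$ has a corresponding point placed in $\hat{H}$ on the $Z$-path between $v_{j-1}$ and $v_j$ at the same offset, and $d_P(v_{j-1},v_j)=d_Z(v_{j-1},v_j)$, so distances inside a single segment agree. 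For points in different segments one walks through $Z$ via $v_j$ and $v_{j'-1}$ (the relevant endpoints of $q$'s and $u$'s segments), and the tree distance $d_Z(v_j,v_{j'-1})$ is upper-bounded by the sum $\sum_{h=j+1}^{j'-1} d_Z(v_{h-1},v_h)$ coming from the preorder traversal, which is exactly the corresponding contribution to $d_P(q,u)$. Hence $d_{\hat{H}}(q,u)\le d_P(q,u)\le (b+1)\cdot\eps a/8$, and symmetrically for $d_{\hat{H}}(v,q')$.

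Combining the three pieces gives $|T(q,q')|\le w+2(b+1)\cdot\eps a/8$. Since $\{u,v\}\in E_i$ forces $w > k^{i-1}L/n = a$, we have $b=\lfloor w/a\rfloor\ge 1$, so $b+1\le 2b\le 2w/a$, and therefore $2(b+1)\cdot\eps a/8\le \eps w/2$, yielding the claimed bound $(1+\eps/2)w$. The main subtlety is the translation from $P$-distances to $\hat{H}$-distances: one must notice that $P$ is essentially a linearization of a preorder walk of the MST, so $P$-distances always upper-bound tree distances in $Z$, and that the Steiner points added to $\hat{H}$ are defined specifically to preserve the per-segment parametrization. The remaining arithmetic — in particular converting the additive slack $(b+1)\cdot L/s$ into a multiplicative $\eps/2$ overhead on $w$ — is routine given the careful choice $s=8L/(\eps a)$.
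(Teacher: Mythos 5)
Your proof is correct and follows essentially the same route as the paper's: bound each of the two $Z$-paths by $(b+1)L/s$ via the observation that $d_P\ge d_Z$, then convert the additive slack $2(b+1)\eps a/8$ into $\eps w/2$ using $w>a$. Your extra care in justifying the $P$-to-$\hat H$ distance translation through the Steiner points just fills in a step the paper handles with a parenthetical remark.
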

\begin{proof}
First observe that the distance in $P$ between any two points in intervals $I_j$ and $I_{j+b}$ is at most $(b+1)L/s$. Since $d_P\ge d_Z$ we also have that the distance in the MST $Z$ between two such points is bounded by $(b+1)L/s$. (By definition, this holds for Steiner points as well.) Denote
 the representatives of $u,v$ as $r_j,r_h$, respectively. For $b=\lfloor w/a\rfloor$, the set $N_b(j)$ contains representatives of at most $b$ intervals away from $I_j$. As $u\in I_j$ we get that $d_Z(q,u)\le(b+1)L/s$. Similarly $d_Z(q',v)\le(b+1)L/s$, thus the total length of the tour is at most $w+2(b+1)L/s     = w + 2(\lfloor {w \over a} \rfloor + 1){\eps \over 8} a \le(1+\eps/2)w$.
\end{proof}

Our goal is to show that $K$ is a simple graph of girth at least $2k+1$. As a warmup, let us first show that $K$ does not have parallel edges.

\begin{proposition}\label{proposition:simple}
The graph $K$ does not have parallel edges.
\end{proposition}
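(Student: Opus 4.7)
The plan is to prove this by contradiction, turning a hypothetical parallel edge into a violation of the greedy admission rule. First I would observe that a single source edge cannot produce parallel edges in $F$, because the set $M$ inserted for it is by definition a matching; hence a parallel edge $\{q,q'\}\in F$ must arise from two distinct source edges $e_1=\{u_1,v_1\}$ and $e_2=\{u_2,v_2\}$ in $E_i$. After relabeling I may assume $w(e_1)\le w(e_2)$, so by the non-decreasing ordering of \algref{alg:greedy} the edge $e_1$ is already in $H$ at the moment $e_2$ is inspected.

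The core step is to splice the two tours $T(q,q')$ provided by \propref{proposition:map} into a walk in $\hat H$ from $u_2$ to $v_2$ that avoids $e_2$: follow the MST sub-path of the $e_2$-tour from $u_2$ to $q$, then the entire $e_1$-tour from $q$ to $q'$ (which uses $e_1$, not $e_2$), then the MST sub-path of the $e_2$-tour from $q'$ to $v_2$. Since $e_j\in E_i$ gives $w(e_j)>a$, the same estimate that underlies \propref{proposition:map} bounds each of the four MST sub-paths by $\eps w(e_j)/4\le\eps w(e_2)/4$; adding the edge $e_1$ itself, of weight at most $w(e_2)$, bounds the walk by $(1+\eps)\,w(e_2)$.

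Every edge used by this walk lies in $H$ when $e_2$ is inspected — MST edges from the start by \obref{ob:mst}, and $e_1$ by the chosen order — and the Steiner points in $\hat H$ do not affect distances among original vertices. Thus
\[
d_H(u_2,v_2)\le(1+\eps)\,w(e_2)\le(2k-1)(1+\eps)\,w(e_2)=t\cdot w(e_2),
\]
which forces \algref{alg:greedy} to reject $e_2$, contradicting $e_2\in E(H)$.

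I expect no serious obstacle: the argument is a direct combination of the tour construction in \propref{proposition:map} with the standard greedy shortcut inequality. The only care needed is in the splicing — to ensure the constructed walk genuinely substitutes $e_1$ for $e_2$ — and in reusing the arithmetic of \propref{proposition:map} for each of the four MST sub-paths so that all four constants line up.
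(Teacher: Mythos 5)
Your proof follows essentially the same route as the paper's: take the two distinct sources of the parallel edge, splice the two tours from \propref{proposition:map} into a walk from the heavier source's endpoints that substitutes the lighter source edge for the heavier one, bound its length by $(1+\eps)w(e_2)$, and contradict the greedy admission rule; your preliminary remark that a single source cannot create parallel edges (since $M$ is a matching) is a point the paper leaves implicit, and your arithmetic for the four MST sub-paths matches the paper's bound. The one imprecise step is asserting that the MST edges are in $H$ ``from the start'' by \obref{ob:mst}: the greedy algorithm inserts MST edges in weight order, and (especially for large $\eps$) an MST edge on the spliced path need not be lighter than $e_2$; the paper closes this with a short argument --- if some MST edge of the path were inserted after $e_2$, the last such insertion would close a cycle, contradicting \obref{ob:mst} --- and you should include the same observation.
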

\begin{proof}
Seeking contradiction, assume there is an edge $\{q,q'\}\in F$ with two different sources $\{u,v\},\{u',v'\}\in E_i$. Without loss of generality
assume that $\{u,v\}$ is the heavier edge of the two, with weight $w$. Then $\{q,q'\}$ is mapped to two tours in $\hat{H}$ connecting $q,q'$, whose total length, using \propref{proposition:map}, is at most $w(2+\eps)$. Consider the tour $\hat{T}=u\to q\to u'\to v'\to q'\to v$ in $\hat{H}$ which has total length at most $w(2+\eps)-w=w(1+\eps)$. Since the Steiner points have degree 2, they can be removed from $\hat{T}$ without increasing its length, and thus there is in $H$ a simple path $T$ from $u$ to $v$ of length at most $w(1+\eps)$.

We claim that $T$ must exist at the time the edge $\{u,v\}$ is inspected by the greedy algorithm. The edge $\{u',v'\}$ exists because it is lighter. The MST edges exist since by \obref{ob:mst} they must connect different components when inspected, while if some of them are inserted after $\{u,v\}$, at least one of them will close the cycle $T\cup\{u,v\}$. As $w(1+\eps)\le w\cdot (2k-1)(1+\eps)$, we conclude that the edge $\{u,v\}$ should not have been added to $H$, which is a contradiction.

\end{proof}

Showing that $K$ has large girth will follow similar lines, but is slightly more involved. The difficulty arises since we added multiple edges for each edge of $H$, thus a cycle in $K$ may be mapped to a closed tour in $H$ that uses the same edge $e\in E(H)$ more than once. In such a case, $e$ may not be a part of any simple cycle contained in the closed tour, and we will not be able to derive a contradiction from the greedy choice of $e$ to $H$. \footnote{In fact, this is the only reason our method improves the lightness by a factor of $\log k$ rather than the desired $k$.} To rule out such a possibility, we use the fact that the multiple edges whose source is $e$ form a matching, and that the weights are different by a factor of at most $k$. 

\begin{lemma}\label{lem:gir}
The graph $K=(R,F)$ has girth $2k+1$.
\end{lemma}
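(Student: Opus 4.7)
The plan is to suppose, toward contradiction, that $K$ contains a cycle $C = q_0 q_1 \cdots q_{\ell-1} q_0$ of length $\ell \le 2k$, and to derive a contradiction to greedy's choice to insert some source edge of $C$. First I map $C$ to a closed walk $W$ in $\hat{H}$ by concatenating the tours $T(q_i, q_{i+1})$; by \propref{proposition:map},
\[
|W| \;\le\; (1+\eps/2)\sum_i w_i \;\le\; (1+\eps/2)\,\ell\,w_*,
\]
where $w_*$ is the maximum source weight among the edges of $C$. Let $e_* = \{u_*,v_*\}$ be a source achieving $w_*$, picked to be inspected last by the greedy algorithm among sources of $C$; then every other source of an edge of $C$, together with all MST edges, lies in $H$ when $e_*$ is inspected. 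Let $m$ denote the multiplicity of $e_*$ as a source in $C$. By \propref{proposition:simple}, the $m$ matching edges of $M_{e_*}$ in $C$ involve $2m$ pairwise distinct endpoints in $R$, all lying in $N_{b_*}(h_*) \cup N_{b_*}(j_*)$ and hence within $Z$-distance $O(\eps w_*)$ of $\{u_*,v_*\}$.

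The core step is an Eulerian/parity argument on $W' := W$ with all $m$ copies of $e_*$ removed. Every vertex of $W'$ has even degree, except that $u_*$ and $v_*$ each shift their parity by $m$. If $m$ is odd, $u_*$ and $v_*$ are the only odd-degree vertices of $W'$, and since every component carries an even number of odd-degree vertices they must share a common component, which therefore admits an Eulerian trail from $u_*$ to $v_*$. If $m$ is even but $u_*,v_*$ happen to share a component of $W'$, then an Eulerian circuit on that component contains a $u_*$-to-$v_*$ subwalk. In either case the resulting walk has length at most
\[
|W| - m\,w_* \;\le\; (2k + k\eps - m)\,w_* \;\le\; (2k-1)(1+\eps)\,w_*,
\]
using $m \ge 1$ and $k \le 2k-1$ (valid for $k \ge 1$); this walk avoids $e_*$ and uses only MST edges and sources preceding $e_*$ in greedy order, so it lives in $H$ at the moment $e_*$ is inspected and contradicts the greedy criterion $d_H(u_*,v_*) > (2k-1)(1+\eps)\,w_*$.

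The remaining and delicate case is $m$ even with $u_*$ and $v_*$ lying in \emph{distinct} components $C_u,C_v$ of $W'$; this is where both the matching property of $M_{e_*}$ and the factor-$k$ weight bound $w(e) > a \ge w_*/k$ for every source $e$ of an edge in $E_i$ are essential. Since $\ell - m \ge 1$, at least one component contains a non-MST source edge; let $e'$ be the latest-inserted such source in that component. The Eulerian closed walk of the component decomposes into edge-disjoint simple cycles in $\hat{H}$, and the cycle through $e'$ has length at most the total length of that component. The matching property of $M_{e_*}$ is used to bound the component's length in terms of the arcs that $M_{e_*}$ carves out of $C$, while the factor-$k$ bound $w(e')>w_*/k$ is what makes the comparison with the greedy threshold $(2k+(2k-1)\eps)\,w(e')$ for $e'$ come out strictly in our favour, yielding a contradiction for $e'$. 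If $e'$ itself fails the required multiplicity condition within this component, the same argument is iterated on the sub-walk, and it terminates because $|W|$ is finite.

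The hardest part of the argument is exactly this last case: the numerical slack between the component's Eulerian length, $w(e')$, and the greedy stretch budget $(2k-1)(1+\eps)w(e')$ is tight for large $k$, and as the authors remark in the overview this tightness is precisely the reason the final lightness improvement over \cite{CDNS92} is $O(\log k)$ rather than the ``desired'' $O(k)$.
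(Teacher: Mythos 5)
Your first two cases ($m$ odd, and $m$ even with $u_*,v_*$ in a common component of $W'$) are correct, and the parity/Eulerian bookkeeping there is a legitimate alternative to the paper's argument. The proof fails in exactly the case you call ``delicate,'' and the failure is quantitative, not a missing detail. In that case you must derive the greedy contradiction for some \emph{other} source $e'$ lying in a component of $W'$, and the only length bound available for that component is inherited from $|W|\le 2k(1+\eps/2)w_*$, i.e.\ it is measured in units of $w_*$. But the greedy threshold for $e'$ is $(2k-1)(1+\eps)\,w(e')$, and within a single class $E_i$ the only guarantee is $w(e') > w_*/k$. So the component's length, expressed in units of $w(e')$, can be as large as $\Theta(k^2)\cdot w(e')$, far exceeding the budget of roughly $2k\cdot w(e')$: the factor-$k$ weight spread works \emph{against} you here, not for you as you assert. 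The ``arcs carved out by $M_{e_*}$'' do not rescue this, since the matching property controls which representatives the arcs connect, not the weights of the sources inside an arc, and the proposed iteration has no mechanism that shrinks the length-to-weight ratio.

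The paper avoids this trap by never attempting a contradiction for a lighter source. It proves the stronger statement that for every $j$ and every $r,r'\in N_k(j)$, any $r$--$r'$ path in $K$ has at least $2k+1$ edges, and takes a globally shortest counterexample $Q$. If the heaviest source $\{x,y\}$ of $Q$ appeared as the source of two edges of $Q$, the matching property would give two distinct representatives $q\ne q'$ lying in a common $k$-neighborhood (using $b\le k$), and the $q$--$q'$ subpath of $Q$ would be a strictly shorter counterexample to the strengthened statement. Hence the heaviest source appears exactly once, and the greedy contradiction is derived for it alone, where the arithmetic closes. If you want to keep the Eulerian framework, you need an analogous minimality device that always lets you charge the heaviest source; as written, your third case is a genuine gap.
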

\begin{proof}
It will be easier to prove a stronger statement, that for any $j\in[s]$ and any $r,r'\in N_k(j)$, every path in $K$ between $r$ and $r'$ contains at least $2k+1$ edges. Once this is proven, we may use this with $r=r'$ to conclude that $\mathit{girth}(K) \ge 2k+1$.

Seeking contradiction, assume that there is a path $Q$ in $K$ from $r$ to $r'$ that contains at most $2k$ edges, and take the shortest such $Q$ (over all possible choices of $j$ and $r,r'$). Let $\{q,z\}\in F$ be the last edge added to $Q$, with source $S(q,z)=(x,y)$ (so that $\{x,y\}\in E_i$ is the heaviest among all the sources of edges in $Q$). We claim that no other edge in $Q$ has $\{x,y\}$ as a source. To see this, consider a case in which such an edge $\{q',z'\}\in F$ is also in $Q$ with $S(q',z')=(x,y)$. We may assume w.l.o.g that $q\notin\{r,r'\}$ (since the path $Q$ contains at least 2 edges), then by definition of the graph $K$, there exists some $j'\in[s]$ with $q,q'\in N_k(j')$ (recall that the neighborhood length $b$ always satisfies $b\le k$ by definition of $E_i$). But then the sub-path of $Q$ from $q$ to $q'$ is strictly shorter than $Q$, and connects two points in the same $k$-neighborhood. Since the edges in $K$ with $\{x,y\}$ as a source form a matching, we get that $q\neq q'$, and thus this path is not of length 0. This contradicts the minimality of $Q$. Next, we will show that $\{x,y\}$ should not have been chosen for $H$, because there is a short path connecting $x$ to $y$.

By \propref{proposition:map} every edge $e\in Q$ whose source $S(e)=e'$ has weight $w(e')$, is mapped to a tour $T(e)$ of length at most $(1+\eps/2)w(e')$ in $\hat{H}$. Since $w(x,y)$ is the maximum weight source of all edges in $Q$, we conclude that the total length of tours connecting $x$ to $r$ and $r'$ to $y$ is at most $(2k-1)\cdot (1+\eps/2)w(x,y)$. Note that $r,r'$ are representatives in $N_k(j)$, which are at most $2k$ intervals apart. So their distance in the MST $Z$ is at most $2k\cdot\eps a/8\le k\cdot\eps w(x,y)/4$. The total length of the  tour  $x\to r\to r'\to y$ in $\hat{H}$ is at most
\[
(2k-1)\cdot (1+\eps/2)w(x,y)+k\cdot\eps w(x,y)/4 \le (2k-1)(1+\eps)\cdot w(x,y)~.
\]
When the algorithm considers the edge $\{x,y\}$, all the edges of the above tour exist in $\hat{H}$. (This follows since they are all MST edges or lighter than $w(x,y)$, similarly to the argument used in \propref{proposition:simple}.)  We conclude that there is a path between $x$ and $y$ in $H$ of length at most $(2k-1)\cdot(1+\eps)\cdot w(x,y)$, hence $\{x,y\}$ should not have been added to $E(H)$, which yields a contradiction.

\end{proof}

\begin{proof}[Proof of \theoremref{thm:tech}]
Recall that the graph $K$ has $s$ vertices. By \propref{proposition:simple} it is a simple graph, and \lemmaref{lem:gir} suggests it has girth at least $2k+1$, thus by using \lemmaref{lem:girth} it has at most $O(s^{1+1/k})$ edges. Using \obref{ob:K-heavy},
\begin{eqnarray*}
w(E_i)&\le&|F|\cdot a\\
&\le& O(s^{1+1/k})\cdot(L/n\cdot k^{i-1})\\
&=&\left(\frac{8 \cdot n}{\eps k^{i-1}}\right)^{1+1/k}\cdot(O(L)/n\cdot k^{i-1})\\
&\le&O(L\cdot (n/k^{i-1})^{1/k}/\eps^{1+1/k})~.
\end{eqnarray*}
\end{proof}

\section{Weighted Girth Conjecture}

The girth of a graph is defined on unweighted graphs. Here we give an extension of the definition that generalizes to weighted graphs as well, and propose a conjecture on the extremal graph attaining a weighted girth.

\begin{definition}
Let $G=(V,E)$ be a weighted graph with weights $w:E\to\R_+$, the {\em weighted girth} of $G$ is the minimum over all cycles $C$ of the weight of $C$ divided by its heaviest edge, that is
\[
\min_{C\text{ cycle in G}}~\left\{\frac{w(C)}{\max_{e\in C}w(e)}\right\}~.
\]
\end{definition}
Note that this matches the standard definition of girth for unweighted graphs.
Recall that the lightness of $G$ is $\frac{w(G)}{w(MST)}$. For a given weighted girth value $g$ and cardinality $n$, we ask what is the graph on $n$ vertices with weighted girth $g$ that maximizes the lightness?
\begin{conjecture}\label{con:wg}
For any integer $g\ge 3$, among all graphs with $n$ vertices and weighted girth $g$, the maximal lightness is attained for an unweighted graph.
\end{conjecture}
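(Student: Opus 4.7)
My plan is to approach \conjref{con:wg} by analyzing an extremal weighted graph and showing it must be essentially unweighted. Let $G^*=(V,E,w^*)$ be a graph on $n$ vertices with weighted girth $g$ that attains the maximal lightness; the goal is to argue that one can take $w^*$ to be constant on $E$, at which point $G^*$ is unweighted.

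Exploiting the scale-invariance of both weighted girth and lightness, I first normalize so that $\max_e w^*(e)=1$, and set $E_1=\{e:w^*(e)=1\}$. For any cycle $C\subseteq E_1$ we have $w^*(C)=|C|$, so the weighted-girth condition $w^*(C)\ge g\cdot 1$ forces $|C|\ge g$; hence the unweighted subgraph $(V,E_1)$ already has girth at least $g$, and $|E_1|/(n-1)$ is a candidate unweighted lightness. The main work is to argue that every $e\in E\setminus E_1$, with $w^*(e)<1$, can be either deleted or promoted to weight $1$ without decreasing lightness, so that the extremum may be assumed to satisfy $E=E_1$.

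The promotion step I would attempt is a first-order perturbation argument. For a non-tree edge $e$, raising $w^*(e)$ strictly increases $w(G^*)$ while leaving $w(MST(G^*))$ fixed, so lightness strictly increases unless some weighted-girth constraint on a cycle through $e$ is tight. Exploiting the structure of that tight cycle, in particular that it must contain a heavier edge whose weight can be simultaneously lowered, I hope to obtain a feasible swap that preserves all weighted-girth inequalities while (strictly or weakly) improving lightness. For a tree edge $e$, a symmetric argument combined with the MST cycle property allows its weight to be matched to that of a non-tree edge on its fundamental cut and thereby promoted to $1$, after possibly re-selecting the MST.

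The central obstacle is the coupling between the MST (which itself depends on $w^*$) and the family of tight cycles: a small perturbation may cascade, swapping MST edges, altering the identity of the maximum-weight edge in many cycles, and activating constraints that were previously slack. To overcome this I would introduce a lexicographic potential on the sorted weight vector and show that a well-chosen sequence of local moves, each either equalizing two weights or swapping an MST edge with a non-MST edge of equal weight, monotonically reduces it. As a warm-up I would first verify the conjecture on graphs assembled solely from cycles of length exactly $g$, since there the weighted-girth condition alone forces equality of all edge weights on every such cycle, supplying a clean base case and structural intuition for the full argument.
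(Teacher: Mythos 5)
This statement is \conjref{con:wg}, which the paper explicitly leaves \emph{open}: no proof is given, only the informal intuition that replacing a heavy edge by many light edges (as in the analysis of \theoremref{thm:tech}) should be doable ``at all scales simultaneously.'' So there is no paper proof to compare against, and the only question is whether your plan actually closes the conjecture. It does not; it is an outline whose load-bearing steps are left as hopes.

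The concrete gaps are these. First, the perturbation analysis is not quite right as stated: the weighted girth of a cycle $C$ through $e$ is $w(C)/\max_{e'\in C}w(e')$, so \emph{increasing} $w(e)$ increases the numerator but, once $e$ is (or becomes) the heaviest edge of $C$, the ratio $(w(C)+\delta)/(w(e)+\delta)$ is \emph{decreasing} in $\delta$. Thus promotion of a sub-maximal edge toward weight $1$ is obstructed not only by constraints that are currently tight but by every cycle in which that edge would become maximal, and near the extremum one expects many such cycles. Second, the compensating move you propose --- lowering the weight of the heavier edge on a tight cycle --- reduces $w(G^*)$ and hence the lightness numerator, and may also change the MST and the identity of the maximal edge on other cycles; you acknowledge this coupling as ``the central obstacle'' but the lexicographic potential and the ``well-chosen sequence of local moves'' that are supposed to overcome it are never defined, and no termination or monotonicity argument is given. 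Third, even the existence of an extremal $w^*$ is not established (the space of weight functions with $\max w=1$ is not compact once weights may tend to $0$, and the supremum of lightness over it need not be attained). What you do correctly observe --- that the weight-$1$ edges alone form an unweighted graph of girth at least $g$, and that the conjecture is scale-invariant --- is a reasonable starting point, but the conjecture is precisely the assertion that the remaining lighter edges contribute no more than an unweighted extremal graph would, and that is the part your plan does not prove.
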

Recall that Erd\H{o}s' girth conjecture asserts that there exists an (unweighted) graph with girth $g>2k$ and $\Omega(n^{1+1/k})$ edges, that is, its lightness is $\Omega(n^{1/k})$. Observe that any graph of weighted girth larger than $2k+\eps(2k-1)$ can be thought of as the output of \algref{alg:greedy} with parameter $t=(2k-1)\cdot(1+\eps)$. In particular, \theoremref{thm:main} implies that its lightness is at most $O_\eps(kn^{1/k}/\log k)$. Thus (up to the term of $\eps(2k-1)$ in the girth), there exists an unweighted graph which is at most $O(k/\log k) = O(g/\log g)$ lighter than the heaviest weighted graph.

The intuition behind this conjecture follows from our method of replacing high weight edges by many low weight edges. We believe that such replacement should hold when performed on all possible scales simultaneously.
An immediate corollary of \conjref{con:wg}, is that the lightness of a greedy $(2k-1)$-spanner of a weighted graph on $n$ vertices is bounded by $O(n^{1/k})$. To see why this is true, note that the spanner's weighted girth must be strictly larger than $2k$, and $O(n^{1/k})$ is a bound on the lightness of an {\em unweighted} graph on $n$ vertices with girth $2k+1$.

{\small \bibliographystyle{alpha}
\bibliography{latex8}}
\end{document}